\newtheorem{myTheorem}{Theorem}
\newtheorem{myDefinition}{Definition}
\title{Securities Based Decision Markets}
\author{Wenlong Wang and Thomas Pfeiffer}
\begin{document}

\maketitle

\begin{abstract}
 Decision markets are mechanisms for selecting one among a set of actions based on forecasts about their consequences. Decision markets that are based on scoring rules have been proven to offer incentive compatibility analogous to properly incentivised prediction markets. However, in contrast to prediction markets, it is unclear how to implement decision markets such that forecasting is done through the trading of securities. We here propose such a securities based implementation, and show that it offers the same expected payoff as the corresponding scoring rules based decision market. The distribution of realised payoffs, however, might differ. Our analysis expands the knowledge on forecasting based decision making and provides novel insights for intuitive and easy-to-use decision market implementations.
\end{abstract}


\section{Introduction}
Prediction markets \cite{Plott2000MarketsTools, Plott2003ParimutuelResults,Hanson2003CombinatorialDesign,Manski2006InterpretingMarkets,Wolfers2006InterpretingProbabilities,Tziralis2007PredictionReview,Hanson2007LogarithmicAggregation,Berg2008PredictionRun,Arrow2008TheMarkets,Chen2010GamingMaker,Dreber2015UsingResearch} are popular tools for aggregating distributed information into often highly accurate forecasts. Participants in prediction markets trade contracts with payoffs tied to the outcome of future events. The pricing of these contracts reflects aggregated information about the probabilities associated with the possible outcomes. A frequently used contract type is Arrow-Debreu securities that pay \$1 when a particular outcome is realised, and otherwise pay \$0. If such a security is traded at \$0.30, this can be interpreted as forecast for that outcome to occur at 30\% chance. Potential caveats with the interpretation of prices in prediction markets as probabilities have been discussed in the literature \cite{Manski2006InterpretingMarkets,Wolfers2006InterpretingProbabilities}, but are not seen as critical for typical applications \cite{Hanson2003CombinatorialDesign,Manski2006InterpretingMarkets,Wolfers2006InterpretingProbabilities}. Prediction markets have been extensively investigated in lab based experiments and real world settings \cite{Plott2000MarketsTools,Plott2002InformationProblem,Plott2003ParimutuelResults,Wolfers2006InterpretingProbabilities,Manski2006InterpretingMarkets,Tziralis2007PredictionReview,Berg2008PredictionRun,Dreber2015UsingResearch}.

In many practical prediction markets applications, such as recreational markets on political events, participants trade directly with each other, and one participant’s gain is the other participant’s loss. Prediction markets can, however, also be designed to offer net benefits to the participants. Such incentivised prediction markets can be used by a market creator who is willing to compensate the market participants for the information obtained from the market \cite{Hanson2003CombinatorialDesign,Hanson2007LogarithmicAggregation,Chen2010GamingMaker}. Incentivised prediction markets rely on market maker algorithms to trade with the participants, and on cost functions to update prices based on past transactions. These functions are closely related to proper scoring rules such as the Brier (or quadratic) scoring rule and the logarithmic scoring rule \cite{Bickel2007SomeRules,Gneiting2007StrictlyEstimation}, which measure the accuracy of forecasts and allow rewarding a single expert based on forecast and actual outcome. The market maker in an incentivised prediction market subsidises the entire market rather than single experts; its worst-case loss is finite and its expected loss depends on how much the participants ‘improve’ on the information entailed by the initial market maker pricing \cite{Hanson2003CombinatorialDesign}.

Accurate forecasts, as obtained from prediction markets, can be of tremendous value for decision makers. Commercial companies, for instance, can benefit substantially from accurate forecasts regarding the future demand for their products. However, many decision-making problems require conditional forecasts \cite{Chen2014ElicitingMaking}. To decide, for instance, between alternative marketing campaigns, a company needs to understand how each of the alternatives will affect sales. In other words, it needs to predict, and choose between, `alternative futures'. To implement such forecasting based decision making, Hanson \cite{Hanson1999DecisionMarkets} proposed so called decision markets. While it is non-trivial to properly incentivise participants to provide their information in decision markets, it has been shown that this can be achieved \cite{Othman2010DecisionMarkets,Chen2011DecisionIncentives,Chen2011InformationMaking,Boutilier2012ElicitingMakers,Chen2014ElicitingMaking}. 

Properly incentivised decision markets work in a stepwise process to select one among a number of mutually exclusive actions. First, forecasts about the expected future consequences of each action are elicited in a step analogous to incentivised prediction markets. Second, a decision rule is used to select an action based on the forecasted consequences. Once an action has been selected, and its consequences are revealed, payoffs are provided for the forecasts as elicited in the first step. Importantly, the decision rule in properly incentivised decision markets is stochastic, with each action being picked with a strictly positive probability \cite{Chen2011InformationMaking,Chen2014ElicitingMaking}. Payoffs are scaled up to ensure that the participants' expected payoffs in decision markets remain analogous to those made in properly incentivised prediction markets \cite{Chen2011InformationMaking,Chen2014ElicitingMaking}, and that game-theoretical results on strategic interactions between participants in prediction markets \cite{Chen2010GamingMaker} carry over. 

The literature on decision markets has so far focused on implementations based on scoring rules. For prediction markets it is well established how to implement properly incentivised forecasting such that forecasts are made through the trading of securities. A similar securities based decision market implementation has however not yet been described. Such an implementation is important because participants in decision markets are likely familiar with ordinary asset trading, and it is thus convenient for them to report their forecasts through a securities trading interface. Furthermore, securities based decision markets simplify managing liabilities, because the payment for the purchased securities covers the traders' worst-case loss. We here propose such a securities based decision market setting, and compare it to existing, scoring rule based decision markets.

The remaining manuscript is organised as follows: In section \ref{sec:background} we briefly introduce scoring rules, sequentially shared scoring rules, prediction markets and scoring rule based decision markets. In section \ref{sec:strictlySecurity}, we describe a securities based market design and compare it with the existing scoring rule based decision markets. In section \ref{sec:worst-case losses} we compare our design with the the scoring rule based decision markets mechanism in terms of worst-case losses. In section \ref{sec:liability_distribution_framework}, we discuss how trading in this setup allows to re-allocate worst-case losses. Finally, in section \ref{sec:conclusion}, we conclude and discuss our future work.

\section{Related Work and Notation} \label{sec:background}
\subsection{Scoring Rules}
Let us define $\Omega$ as a finite set of mutually exclusive and exhaustive outcomes $\{\omega_1, \omega_2, ..., \omega_n\}$. A probabilistic prediction for those outcomes is denoted by $\vec{r} = (r_1, r_2, ..., r_n )$ with $\sum_{x=1}^n r_x = 1$ and $r_i \in [0, 1]$. A scoring function $s_i(\vec{r})$ allows to quantify the accuracy of prediction $\vec{r}$  once the outcome $\omega_i$ materialises \cite{Garthwaite2005StatisticalDistributions}.

Scoring rules allow to incentivise forecasters for predictions. Denoting the reported distribution as $\vec{r} = (r_1, r_2, \dots, r_n)$ and the forecasters' belief as $\vec{p} = (p_1, p_2, \dots, p_n)$, the expected payoff for a forecaster is given by
\[
    G(\vec{p}, \vec{r}) = \sum_{k=1}^n p_k s_k(\vec{r})
\]
A scoring rule is defined as proper if a forecaster maximises his/her expected payoff by truthfully reporting what he/she believes.
\[
    G(\vec{p}, \vec{p}) \geq G(\vec{p}, \vec{r})
\]
Furthermore, a scoring rule is strictly proper if $G(\vec{p},\vec{p})>G(\vec{p},\vec{r})$ for all $\vec{r} \neq \vec{p}$. 


\subsection{Sequentially Shared Scoring Rules} \label{sec:sequentially shared scoring rules}
Because information is often distributed across multiple agents, it is of interest to expand proper scoring to elicit forecasts from groups of forecasters. In his work on incentivised prediction markets, Hanson proposed a mechanism to sequentially elicit information from forecasters \cite{Hanson2003CombinatorialDesign,Hanson2007LogarithmicAggregation}. The mechanism keeps a current report $\vec{r}$ and offers a contract for a new report $^*\vec{r}$ to be scored as $s_i(^*\vec{r}) - s_i(\vec{r})$ if the outcome $\omega_i$ is observed. Note that $s_i(^*\vec{r}) - s_i(\vec{r})$ is a proper scoring rule if $s_i$ is a proper scoring rule. Once a forecaster accepts the offer, the decision maker will update the current report from $\vec{r}$ to $^*\vec{r}$ and allow a next forecaster to further modify the new current report. Under such a sequentially shared scoring rule, forecasters are scored for how much they improve or worsen the current report. Such a mechanism uses incentives efficiently in that it avoids paying for the same information twice.


\subsection{Securities based Prediction Markets} \label{sec:securities_based_prediction_markets}
The mechanism described in section \ref{sec:sequentially shared scoring rules} involves a two-sided liability. The decision maker is liable to pay each forecaster who improves a forecast, and forecasters are liable to pay the decision maker if they worsen a forecast. It is often considered convenient to implement sequentially shared scoring rules through the trading of Arrow-Debreu securities \cite{Hanson2003CombinatorialDesign,Hanson2007LogarithmicAggregation}. In such an implementation, forecasters purchase securities from the market maker and their payments cover their liabilities. Another reason to use securities based trading is that the majority of existing real-world prediction markets, such as recreational markets on sports or political events, are trading securities in a double auction process. Traders who are familiar with these prediction markets will prefer an interface to be expressed in terms of trading with securities. 

To incentivise traders, securities are bought and sold by a market creator. The market creator uses a market maker algorithm which keeps track of past trades and sets security prices derived from a cost function. The total amount spent on purchasing a particular quantity of securities can be calculated from this cost function. We denote the quantity of outstanding securities as $\vec{q} = (q_1, q_2, \dots, q_n)$ for a market on $n$ mutually exclusive and exhaustive outcomes $\Omega$. Element $q_i$ represents the number of securities sold by the market creator that pay if outcome $\omega_i$ is observed. The instantaneous prices of the securities with outstanding quantities $\vec{q}$ are denoted as $\vec{r}(\vec{q})$ and play the same role as reports in scoring rule based markets.

Assume a trader wants to change the outstanding securities distribution from $\vec{q}$ to $^*\vec{q}$ by buying securities to change the price from $\vec{r}$ to $^*\vec{r}$. The cost for the trader to purchase the amount of securities $^*\vec{q} - \vec{q}$ can be calculated from $C(^*\vec{q}) - C(\vec{q})$, where $C(\vec{q})$ denotes a cost function. Once the final event, i.e. $\omega_i$ is observed, the market maker will resolve the market by paying \$1 for each winning security. If the trader holds  $^*q_i - q_i$ securities when the market is resolved,  his/her payout will be \$$(^*q_i - q_i)$. Overall the realised payoff for the trader will be
\[
    (^*q_i - q_i) - (C(^*\vec{q}) - C(\vec{q}))
\]

\citeauthor{Chen2007AMakers} generalised the relationship between cost functions, price functions and scoring rules, and proposed three equations that establish their equivalence \cite{Chen2007AMakers}:

\begin{equation} \label{eq:cost_function}
    \begin{cases}
               s_i(\vec{r}) = q_i - C(\vec{q}) \qquad \forall i\\
               \sum_i r_i(\vec{q}) = 1\\
               r_i(\vec{q}) = \frac{\partial C(\vec{q})}{\partial q_i}
    \end{cases}
\end{equation}
Furthermore, \citeauthor{Chen2010ALearning} proved that there exist a one-to-one mapping between any strictly proper scoring rule and cost function in securities based prediction markets and such a securities based market is incentive compatible \cite{Chen2010ALearning}. Elicitation through scoring rule based and securities based prediction markets offers the same payoffs for participants when the markets start with the same initial forecasts and end with the same final forecasts.

Cost functions $C(\vec{q})$ and price functions $r_k(\vec{q})$ have the following properties:

\begin{equation} \label{eq:cost_property}
    \begin{split}
        &C(\vec{q} + \beta\vec{1}) = \beta + C(\vec{q}) \\
        &r_k(\vec{q} + \beta\vec{1}) = r_k(\vec{q}) \\
    \end{split}
\end{equation}
where $\beta$ is a real constant \cite{Chen2007AMakers}. These properties imply that the same report can be made through different trades. If a trade $^*\vec{q} - \vec{q}$ changes market prices from $\vec{r}$ to $^*\vec{r}$, so does a trade $^*\vec{q}+\beta \vec{1} - \vec{q}$. This permits the trader to make any report by buying contracts from the market creator. Short selling is not required. The overall payoff will not be affected by the choice of $\beta$, because both costs of purchasing the contracts, and the payout from the contracts at resolution increase by the same amount.






\subsection{Decision Markets} \label{sec:decisionMarkets}

The design of decision markets expands prediction markets to use conditional forecasts for decision making. Decision markets consist of two components. The first component is a set of conditional prediction markets, each of which elicits the forecasts for one of the actions. The second component is the decision rule that defines--- based on conditional prediction markets forecasts--- how the final decision will be made. An example is the MAX decision rule \cite{Othman2010DecisionMarkets} which is to always select the action that has the highest predicted probability for a desired outcome to occur. 

Decision markets with deterministic rules such as the MAX decision rule do not always properly incentivise a forecaster to truthfully report irrespective of the scoring rule it uses \cite{Othman2010DecisionMarkets,Chen2011InformationMaking}. An intuitive example to illustrate how a trader can benefit from misreporting is given in \cite{Othman2010DecisionMarkets}. \citeauthor{Chen2014ElicitingMaking} described that a stochastic decision rule can myopically incentivise forecasters to truthfully report \cite{Chen2014ElicitingMaking}. This approach is rephrased in the following in the notation used throughout this paper to allow for straight forward comparison with the securities-based implementation.

\begin{myDefinition} \label{def:decision_markets_general}
In a decision market, the market creator has a finite set of $m$ actions $\mathcal{A}=\{\alpha_1, \alpha_2, ..., \alpha_m\}$ to choose from. For each action $\alpha_j$, there is a set of possible outcomes $\Omega_j = \{\omega^j_{1}, \omega^j_{2}, \dots, \omega^j_{n_j}\}$, which $n_j$ is the number of possible outcomes for action $\alpha_j$. Both action set $\mathcal{A}$ and outcome sets $\Omega_j$ are collectively exhaustive and mutually exclusive. A stochastic decision rule $\vec{\phi}$ assigns a probability $\phi_k$ to each action $\alpha_k$ with $\phi_k > 0$ and $\sum_{k=1}^m \phi_k = 1$.
\end{myDefinition}
Note that in our notation the outcome $\omega^j_i$ for action $\alpha_j$ can be unrelated to $\omega^k_i$ for action $\alpha_k$. In other words, outcomes can be specific to the actions. The sets for the outcome of two actions can be completely disjoint. The decision rule can take the final report into account, i.e. $\vec{\phi}=\vec{\phi}(\vec{r}_1, \vec{r}_2, \dots, \vec{r}_m)$ where $\vec{r}_1, \vec{r}_2, \dots, \vec{r}_m$ are the final reports over the $m$ different actions. It can, for instance, approximate the MAX decision rule by assigning high probabilities to actions with desirable outcomes.

Similar to scoring rules in prediction markets, decision scoring rules can be defined to map forecasts, decisions and outcomes to a real number. For simplicity, we will denote this score as $S^j_i(\vec{r}_j)$ that the selected action is $\alpha_j$ and the observed event is $\omega^j_i$. Assume $s^j_i(\vec{r}_j)$ is a strictly proper scoring rule for conditional market $j$. A decision score for changing the current report from $\vec{r}_j$ to $^*\vec{r}_j$ is given by:
\begin{equation} \label{eq:oneTimePayOffSco}
   S^j_i(^*\vec{r}_j) - S^j_i(\vec{r}_j) = \frac{1}{\phi_j} \left(s^j_i(^*\vec{r}_j) -  s^j_i(\vec{r}_j) \right)
\end{equation}
The expected payoff $G$ of a forecaster in a scoring rule based decision market as defined in definition \ref{def:decision_markets_general} is given by:
\begin{equation} \label{eq:score_expected_payoffs}
    \begin{split}
        G = & \sum_{j=1}^m \phi_j \sum_{i=1}^{n_j} p^j_i \frac{1}{\phi_j} \left(s^j_i(^*\vec{r_j}) - s^j_i(\vec{r}_j) \right) \\
        = &\sum_{j=1}^m \sum_{i=1}^{n_j} p^j_i \left(s^j_i(^*\vec{r_j}) - s^j_i(\vec{r}_j) \right) 
    \end{split}
\end{equation}
where $p^j_i$ denotes the belief of the forecaster which will be identical to $^*\vec{r_j}$ if the scoring rule is strictly proper. The forecaster has the same expected payoff as if he/she participated in $m$ independent and strictly proper prediction markets. Moreover, findings on strategic interaction between traders and incentives for instantaneous revelation of information from \cite{Chen2010GamingMaker} apply as well.

Note that $\phi_j$ in equations \eqref{eq:oneTimePayOffSco} and \eqref{eq:score_expected_payoffs} is the probability for the selected action in the decision rule after the final report. Equation \eqref{eq:score_expected_payoffs} shows that the value of $\phi_j$ does not affect the expected payoff that risk-neutral forecasters seek to maximise. This is important because for scoring rules that depend on the final report, no participant  except for the final forecaster knows the value of $\phi_j$. To provide truthful forecasts forecasters do not need the decision rule $\phi$ and its dependence on the final report as long as they can trust that the rule has full support.

An alternative method is to ask a single expert for a recommendation and use scoring rules to align the incentives of the expert with the market creator's interest \cite{Chen2014ElicitingMaking, Oesterheld2020DecisionRules}.

\subsection{Empirical Work}
There is a substantial body of empirical studies on prediction markets \cite{Plott2000MarketsTools,Plott2002InformationProblem,Plott2003ParimutuelResults,Wolfers2006InterpretingProbabilities,Manski2006InterpretingMarkets,Tziralis2007PredictionReview,Berg2008PredictionRun,Dreber2015UsingResearch}, and a number of empirical studies on decision markets \cite{Teschner2017ManipulationMarkets}. Some of these studies have addressed whether there are any differences in the efficiency of a scoring rule based mechanism vs. a securities based mechanism. \citeauthor{Jian2012AggregationDistribution} conducted laboratory experiments and found the performance of the two mechanisms is similar \cite{Jian2012AggregationDistribution}; but they stated that further validation was required in ‘field settings’, which familiarity with how to report forecasts may be an important factor. While similar experimental comparisons in decision markets do not exist yet, it is worth noting that securities trading is frequently used in the experimental literature about decision markets \cite{Teschner2017ManipulationMarkets}.

\section{Strictly Proper Securities Based Decision Markets} \label{sec:strictlySecurity}

In section \ref{sec:securities_based_prediction_markets}, we discuss the advantages of implementing forecasting through the trading of securities. We here formulate a cost function for securities based decision markets that offers the same expected payoff for participants as a scoring rule based decision market. In a prediction market, the cost function and price function can be calculated by solving the equations \eqref{eq:cost_function} \cite{Chen2007AMakers}. However, because only decision markets with a stochastic decision rule are myopically incentive compatible, the stochastic decision rule needs to be accounted for.

\subsection{Design}
We adopt the cost function approach for prediction markets as described in section \ref{sec:securities_based_prediction_markets}. To account for the stochastic decision rule, the securities traded in this market have payoffs that depend on the selected action.

\begin{myDefinition} \label{def:decision_market_security}
In addition of the notation in definition \ref{def:decision_markets_general}, we denote $\vec{q}_j = (q^j_1, q^j_2, \dots, q^j_{n_j})$ as outstanding securities for the conditional market for action $\alpha_j$. Element $q_i^j$ represents the number of securities sold by the market creator that pay if action $\alpha_j$ is selected and outcome $\omega_i^j$ is observed. The payout per security is denoted by $v_j$, and can depend on the selected action, but is the same for all traders and does not depend on the observed outcome. The payout for all other securities is zero. Cost function, price function and corresponding scoring rule for the conditional market on action $\alpha_j$ and outcome $\omega^j_i$, are denoted by $C_j(\vec{q}_j)$, $r^j_i(\vec{q}_j)$ and $s^j_i(\vec{r}_j)$, respectively, and together fulfil equation 4. 

\end{myDefinition}


\begin{myTheorem}
Let a trader in a securities based decision market as defined in definition \ref{def:decision_market_security} make a trade $^*\vec{q}_j - \vec{q}_j$ to move prices from $\vec{r}_j(\vec{q}_j)$ to $\vec{r}_j(^*\vec{q}_j)$. Then the trader will have the same expected payoff as a forecaster who makes the same forecast in a scoring rule based decision market as described in equation \eqref{eq:score_expected_payoffs} if and only if we set $v_j = 1/\phi_j$ for all action $\alpha_j$.

\end{myTheorem}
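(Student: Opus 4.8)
The plan is to write out the trader's realised payoff in the securities based market, take its expectation over the stochastic decision rule and the conditional outcome distributions, and then rewrite the result with the help of equation~\eqref{eq:cost_function} so that it can be compared term by term with the scoring rule based payoff~\eqref{eq:score_expected_payoffs}. Concretely, fix the trader's belief $\vec{p}_j=(p^j_1,\dots,p^j_{n_j})$ in each conditional market. Having executed the trades $^*\vec{q}_j-\vec{q}_j$, the trader has paid $\sum_{k=1}^m\bigl(C_k(^*\vec{q}_k)-C_k(\vec{q}_k)\bigr)$ up front, and this amount is sunk regardless of which action is selected. If the decision rule selects $\alpha_j$ and outcome $\omega^j_i$ is then observed, only the securities held in conditional market $j$ pay, returning $v_j(^*q^j_i-q^j_i)$, while all other securities are worthless. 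Taking the expectation over $\vec{\phi}$ and the $\vec{p}_j$ and using $\sum_i p^j_i=1$ together with $\sum_j\phi_j=1$ to collapse the cost contribution yields
\[
  G_{\mathrm{sec}}=\sum_{j=1}^m\phi_j v_j\sum_{i=1}^{n_j}p^j_i\,(^*q^j_i-q^j_i)-\sum_{k=1}^m\bigl(C_k(^*\vec{q}_k)-C_k(\vec{q}_k)\bigr).
\]

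The key step is to trade the security quantities for scores. The first line of~\eqref{eq:cost_function} gives $s^j_i(\vec{r}_j)=q^j_i-C_j(\vec{q}_j)$, hence $^*q^j_i-q^j_i=\bigl(s^j_i(^*\vec{r}_j)-s^j_i(\vec{r}_j)\bigr)+\bigl(C_j(^*\vec{q}_j)-C_j(\vec{q}_j)\bigr)$. Substituting this, and once more using $\sum_i p^j_i=1$, the cost terms regroup and one is left with
\[
  G_{\mathrm{sec}}-G=\sum_{j=1}^m(\phi_j v_j-1)\sum_{i=1}^{n_j}p^j_i\,(^*q^j_i-q^j_i),
\]
where $G$ is the scoring rule based expected payoff~\eqref{eq:score_expected_payoffs} for the same sequence of reports. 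The ``if'' direction is then immediate: setting $v_j=1/\phi_j$ annihilates every coefficient $\phi_j v_j-1$, so $G_{\mathrm{sec}}=G$.

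For the ``only if'' direction I would use that the equality is required for every admissible trade, and isolate the coefficients one at a time. Fixing an action $\alpha_{j_0}$ and considering a trader who only touches conditional market $j_0$ leaves $G_{\mathrm{sec}}-G=(\phi_{j_0}v_{j_0}-1)\sum_i p^{j_0}_i(^*q^{j_0}_i-q^{j_0}_i)$. By the translation invariance in~\eqref{eq:cost_property}, the trader can reach any target report $^*\vec{r}_{j_0}$ with a quantity vector for which the weight $\sum_i p^{j_0}_i(^*q^{j_0}_i-q^{j_0}_i)$ is non-zero (shift the trade by a suitable multiple of $\vec{1}$, which changes this weight but leaves the induced prices untouched). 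Hence $\phi_{j_0}v_{j_0}=1$, and since $j_0$ was arbitrary, $v_j=1/\phi_j$ for all $j$.

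The expansion of the expectation and the application of~\eqref{eq:cost_function} are routine; I expect the only delicate point to be the ``only if'' step --- making precise that each coefficient $\phi_j v_j-1$ can be forced to vanish separately, and that the accompanying quantity weight can be made non-zero without altering the reported probabilities, which is exactly what the constant-shift property~\eqref{eq:cost_property} supplies.
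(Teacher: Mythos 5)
Your derivation is correct and the core of it coincides with the paper's own proof: both expand the realised payoff $v_j(^*q^j_i-q^j_i)-\sum_k\bigl(C_k(^*\vec{q}_k)-C_k(\vec{q}_k)\bigr)$, take the expectation over the decision rule and outcomes to get equation \eqref{eq:expected_payoff_securities}, and then substitute \eqref{eq:cost_function} to obtain exactly the decomposition $\hat{G}=G+\sum_j(\phi_j v_j-1)\sum_i p^j_i(^*q^j_i-q^j_i)$ of equation \eqref{eq:expected_payoff_securities2}, from which the ``if'' direction is immediate.

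Where you genuinely differ is in the ``only if'' direction. The paper argues somewhat informally: it observes that any alternative choice of $v_j$ making the residual term vanish would have to be tuned to trade-specific quantities $^*q^j_i$ (so that the relevant dot product is zero), which contradicts the requirement in definition \ref{def:decision_market_security} that the payout per security be the same for all traders and independent of the particular trade. You instead quantify over trades explicitly: restrict to a trade touching only market $j_0$ and use the translation property \eqref{eq:cost_property} to shift the purchased quantities by $\beta\vec{1}$, which leaves the reported prices unchanged but moves the weight $\sum_i p^{j_0}_i(^*q^{j_0}_i-q^{j_0}_i)$ by $\beta$ (since $\sum_i p^{j_0}_i=1$), so it can always be made non-zero; hence $\phi_{j_0}v_{j_0}=1$ for each $j_0$. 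This is a cleaner and more self-contained way to force each coefficient to vanish, and it makes explicit the quantifier (``for every admissible trade'') that the theorem statement leaves implicit; the paper's version instead leans on the structural restriction on $v_j$ from the definition. Both are sound readings, so your proof stands as a slightly sharper version of the same argument.
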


\begin{proof}
Let a forecaster in the a scoring rule based decision market change the reports from $\vec{r}_k$ to $^*\vec{r}_k$ for any action $\alpha_k$. The expected payoff of the forecaster is denoted as $G$ and is given in the equation \eqref{eq:score_expected_payoffs}. Let a trader in our securities based decision market change the outstanding securities distribution from $\vec{q}_k$ to $^*\vec{q}_k$ for each action $\alpha_k$ such that prices change from $\vec{r}_k$ to $^*\vec{r}_k$. Then the realised payoff the trader gains from such a trade is given by:
  \[
      v_j \left( ^*q^j_i - q^j_i \right) - \sum_{k=1}^m \Big( C_k(^*\vec{q_k}) - C_k(\vec{q_k}) \Big)
  \]
  where the selected action is $\alpha_j$ and the observed outcome is $\omega^j_i$.
  
  The expected payoffs of the trader is denoted as $\hat{G}$ and we obtain:    


   \begin{equation} \label{eq:expected_payoff_securities}
        \begin{split}
         \hat{G} = &  \sum_{j=1}^m \phi_j \sum_{i=1}^{n_j} p^j_i \left(v_j \left( ^*q^j_i - q^j_i \right) - \sum_{k=1}^m \Big( C_k(^*\vec{q_k}) - C_k(\vec{q_k}) \Big) \right)\\
        = & \sum_{j=1}^m \phi_j v_j \sum_{i=1}^{n_j} p^j_i\left( ^*q^j_i - q^j_i \right) - \sum_{k=1}^m \Big( C_k(^*\vec{q_k}) - C_k(\vec{q_k})  \Big) \\
        \end{split}
  \end{equation}
  Substituting equation \eqref{eq:cost_function} into the equation \eqref{eq:expected_payoff_securities}, we obtain:
  
  \begin{equation} \label{eq:expected_payoff_securities2}
      \begin{split}
        \hat{G} = & \sum^m_{j=1} \sum^{n_j}_{i=1} p^j_i \left( ^*q^j_i - q^j_i \right) - \sum_{k=1}^m \Big( C_k(^*\vec{q_k}) - C_k(\vec{q_k}) \Big) + \sum_{j=1}^m (\phi_j v_j - 1) \sum_{i=1}^{n_j} p^j_i\left( ^*q^j_i - q^j_i \right) \\
        = & G +  \underbrace{\sum_{j=1}^m (\phi_j v_j - 1) \sum_{i=1}^{n_j} p^j_i\left( ^*q^j_i - q^j_i \right)}_{\text{a}}\\
      \end{split}
  \end{equation}
  
  

  
  The expected payoff $\hat{G}$ in a securities based market is equal to the expected payoff $G$ in a scoring rule based market if and only if term a in equation \eqref{eq:expected_payoff_securities2} is zero. One way to achieve this for arbitrary trades is to set the payoffs of the contracts $v_j$ to $1/\phi_j$. Thus $G = \hat{G}$ if $v_j=1/\phi_j$.
  
  An alternative with $v_j \neq 1/\phi_j$ would be to choose $v_j$ such that the vector (dot) product $\vec{a} \cdot \vec{b}$, with vector element $a_j$ being defined as $\sum_{j=1}^m (\phi_j v_j - 1)$ and $b_j$ being defined $\sum_{i=1}^{n_j} p^j_i\left( ^*q^j_i - q^j_i \right)$, becomes zero. This however, would require to make $v_j$ dependent on trade-specific quantities such as the $^*q^j_i$ and contradicts the properties of contract payoffs as defined in definition \ref{def:decision_market_security}.

\end{proof}

\subsection{Distribution of Realised Payoffs} \label{sec:comparison}

Securities based decision markets and corresponding scoring rule based decision markets provide the identical expected payoff for participants under the same conditions. However, the actual distribution of payoffs for the participants are not necessarily the same. In this subsection, we will discuss the difference between securities based decision markets and the corresponding scoring rule based decision market in terms of realised payoffs for participants.

The realised payoffs for a forecaster who changes report $\vec{r_k}$ to $^*\vec{r_k}$ in a scoring rule based decision market is given by equation \eqref{eq:oneTimePayOffSco}. In the securities based market, assume a trader makes a trade to change the price for any action $\alpha_k$ from $\vec{r}_k$ to $^*\vec{r}_k$ . This trade changes the market creator inventory from $\vec{q}_k$ to $^*\vec{q}_k$ and has a cost given by $C_k(^*\vec{q}_k)-C_k(\vec{q}_k)$. Let the market creator select the action $\alpha_j$ to execute and observe the outcome $\omega^j_i$. Using equation \eqref{eq:cost_function} we obtain the realised payoffs for the trader in the securities based decision market:
\begin{equation} \label{eq:security_actual_payoff}
    \begin{split}
         & \frac{1}{\phi_j} \left( ^*q^j_i - q^j_i \right) - \sum_{k=1}^m \Big( C_k(^*\vec{q_k}) - C_k(\vec{q_k}) \Big) \\
        = \quad & \frac{1}{\phi_j} \left(s^j_i(^*\vec{r_j}) - s^j_i(\vec{r_j}) \right) 
         + \underbrace{\frac{1}{\phi_j} \Big(C_j(^*\vec{q_j}) - C_j(\vec{q_j}) \Big) - \sum_{k=1}^m \Big(C_k(^*\vec{q_k}) - C_k(\vec{q_k}) \Big)}_{\text{a}} \\
    \end{split}
\end{equation}
The term a in equation \eqref{eq:security_actual_payoff} shows that there is a difference in realised payoffs of participants between the securities based decision market and the scoring rule based decision market. This difference cancels out when computing the expected payoff of a participant. Although the sign of term a cannot be decided easily, term a will increase when the trader spends more in the selected conditional market and decrease when the trader spends more in the conditional markets that are not selected.

Equation \eqref{eq:security_actual_payoff} shows that the realised payoffs in our securities based decision market can be rewritten as a scoring rule with an additional term a. This term can be interpreted as a `lottery' that costs $I$ and returns $I/\phi$ at probability $\phi$. Equation \eqref{eq:security_actual_payoff} can be generalised as:

\begin{equation} \label{eq:generalised_actual_payoff}
    \frac{1}{\phi_j} \left(s^j_i(^*\vec{r_j}) - s^j_i(\vec{r_j}) \right) + \underbrace{\frac{1}{\phi_j} I_j - \sum_{k=1}^m I_k}_{\text{a}} \\
\end{equation}
Equation \eqref{eq:security_actual_payoff} is a special case of equation \eqref{eq:generalised_actual_payoff} where $I_k$ equals to the total cost the trader spent in conditional market $k$ for each $k$. This can be interpreted as a forecaster making a report in a scoring rule based decision market and simultaneously participates in a `lottery' at a cost that depends on the report. Any securities based decision market can be seen as a market where the market creator offers a scoring rule based decision market along with a `lottery' with an expected payoff of zero. The terms $I_k$ can be chosen such that the payoffs from a scoring rule based decision market are the same as for a securities based decision market. In the following we provide an example to illustrate the differences between the payoffs in securities based and scoring rule based decision markets. In Section \ref{sec:worst-case losses} and Section \ref{sec:liability_distribution_framework} we will show how the additional flexibility in the design of securities based decision markets can be used to reallocate liabilities and worst case losses between traders and market creator.

\subsubsection{Example: investment into a single conditional market.}
To further detail the differences between a securities based decision market and the corresponding scoring rule based decision market, we analyse an example where a trader invests in only one conditional market, the market corresponding to action $\alpha_k$. We focus on a setting where forecasters do not engage in ‘short selling’, and thus only hold positive positions, and the cost $C_k(^*\vec{q}_k)- C_k(\vec{q}_k)$ paid to the market creator is positive. Assume the market creator to select the action $\alpha_j$ and to observe the outcome $\omega^j_i$. We will compare the realised payoffs between securities based decision markets and corresponding scoring rule based decision market under two conditions.

The realised payoffs for a trader who invested into the selected market, i.e. $j=k$, can be found in table \ref{tab:seleceted}. Naturally, the forecaster in corresponding scoring rule based decision market is also assumed to report in the selected conditional market.
\begin{table}[ht]
\centering
\caption{Realised payoff difference between a decision scoring rule based decision market and the corresponding security based decision market when participants invest into or report on the selected action.}
\begin{tabular}{cc}
\hline
\textbf{Market Type} & \textbf{Realised Payoff}  \\ \hline
\rowcolor[HTML]{EFEFEF}
Scoring Rule Based & $\frac{1}{\phi_j} \left(s_i(^*r_j) - s_i(r_j) \right)$ \\
Securities Based & $\frac{1}{\phi_j}(^*q^j_i - q^j_i ) - (C_j(^*\vec{q_j}) - C_j(\vec{q_j}) )$ \\ \hline
\end{tabular}

\label{tab:seleceted}
\end{table}

Using equation \eqref{eq:cost_function} for the realised payoffs of securities based decision market in table \ref{tab:seleceted} we obtain:
\begin{equation} \label{eq:marketDiff}
\begin{split}
    & \frac{1}{\phi_j} \left(^*q^j_i - q^j_i \right) -  \Big(C_j(^*\vec{q_j}) - C_j(\vec{q_j}) \Big) \\
    = \quad & \frac{1}{\phi_j} \left(s_i(^*r_j) - s_i(r_j) \right) + \underbrace{\frac{1-\phi_j}{\phi_j} \Big(C_j(^*\vec{q_j}) - C_j(\vec{q_j}) \Big)}_{\text{positive}}
\end{split}
\end{equation}
Assuming that traders can only hold positive positions, i.e. cannot ``short" securities, equation \eqref{eq:marketDiff} shows that participants gain a larger payoff in securities based decision markets compared to scoring rule based decision markets.


Table \ref{tab:unselected} shows the realised payoffs of the trader and the forecaster investing into, or report, on an unselected conditional market, i.e. $k \neq j$.
\begin{table}[ht]
\centering
\caption{Payoff difference between a decision scoring rule based decision market and the corresponding security based decision market when participants invest into or report on an unselected conditional market.}
\begin{tabular}{cc}
\hline
\textbf{Market Type} & \textbf{Realised Payoff}  \\ \hline
\rowcolor[HTML]{EFEFEF}
Scoring Rule Based & $0$ \\
Securities Based & $- (C_k(^*\vec{q_k}) - C_k(\vec{q_k}))$ \\ \hline
\end{tabular}
\label{tab:unselected}
\end{table}

Changing the prediction for a conditional market corresponding to an action that is not selected has a zero payoff in the scoring rule based decision market, regardless of how accurate the prediction is. This is because in the scoring rule based decision market, unselected conditional markets will be declared void. However, there is a cost for changing a prediction for an unselected market in the securities based market because purchasing shares to changing a prediction is costly.

\section{Worst-case Losses for Participants and Market Creator} \label{sec:worst-case losses}
An analysis of worst-case losses is crucial for practical implementation because it needs to be ensured that all liabilities can be properly resolved. A common way to ensure that all parties can serve their liabilities is through depositing escrows which can cover the worst-case scenario. A further purpose is to understand how liabilities are distributed between market creator and participants.
\subsection{Worst-case Loss for Participants} \label{sec:worstcaseparticipants}

Consider a forecaster in a scoring rule based decision market who reports $^*\vec{r_k}$ when the current prediction is $\vec{r_k}$ for each conditional market $k$. The worst-case loss for this report is given by: 
\begin{equation} \label{eq:p_srb_wl}
    \begin{split}
        & \min_{j, i} \left(S^j_i(^*\vec{r_j}) - S^j_i(\vec{r_j}) \right) \\
        = \quad & \min_{j, i} \frac{1}{\phi_j} \left(s^j_i(^*\vec{r_j}) - s^j_i(\vec{r_j}) \right)
    \end{split}
\end{equation}

From equation \eqref{eq:p_srb_wl} we can tell that the worst-case loss for the forecaster depends on both the decision rule $\phi_j$ and the report $^*\vec{r_j}$ he/she made. The probability $\phi_j$ depends on the decision rule. Small probabilities in the decision rule, which may be in the interest of market creator to approximate deterministic scoring rules, increase the worst-case loss for the forecaster.

A trader in a security based decision market purchase securities from $\vec{q}_k$ to $^*\vec{q_k}$. Assuming again that forecasters cannot hold negative positions, the worst-case loss for the trader can be calculated as:
\begin{equation} \label{eq:p_sb_wl}
    \begin{split}
        & \sum_{k=1}^m \Big(C_k(\vec{q_k}) - C_k(^*\vec{q_k}) \Big) \\
    \end{split}
\end{equation}

Equation \eqref{eq:p_sb_wl} shows that the worst-case loss for a trader in the securities based decision market only depends on the cost the trader spent. In other words, the trader in the securities based market will not be exposed to any liabilities beyond the costs already paid for purchasing the assets. Therefore a securities based implementation has the advantage that it does not need to further track the liabilities on the side of the traders. Moreover, the worst-case loss of trader does not depend on the decision rule.

\subsection{Worst-case Loss for Market Creator} \label{sec:worstcasecreator}
The loss of a market creator mirrors the profits gained by the participants. Apart from the distribution of realised payoffs for the participants, there is therefore a difference of the worst-case loss for the market creators between a scoring rule based decision market and the corresponding securities based decision market.

Carrying over the conditions from equation \eqref{eq:oneTimePayOffSco}, the worst-case loss for a market creator of a scoring rule based decision market is
\begin{equation} \label{eq:mc_srb_wl}
    \begin{split}
        & \min_{j, i} \left(S^j_i( \vec{r_j}) - S^j_i( ^*\vec{r_j}) \right) \\
        = \quad & \min_{j, i} \frac{1}{\phi_j} \left(s^j_i(\vec{r_j}) - s^j_i(^*\vec{r_j}) \right)
    \end{split}
\end{equation}

As we can tell from equation \eqref{eq:mc_srb_wl}, the worst-case loss for a market creator depends on three factors: initial report $\vec{r}_j$ and final report $^*\vec{r}_j$ for the selected conditional market and the decision rule $\phi_j$. Among three factors, the market creator has control over the initial report $\vec{r}_j$ and the value that decision rule $\phi_j$ can take, but does not have control over which action is being picked. Even though the decision rule can be arbitrary as long as forecasters are convinced that it has full support, it is the interest of the market creator to take the final forecasts of the market into account. For instance, it does not fit the interest of the market creator to assign a small probability to the action that is predicted to most likely lead to a desirable outcome. The relationship between decision rule $\phi_j$ and the final score $s^j_i(^*\vec{r_j})$ can be complex in that it depends on how exactly the final forecast determines the decision rule $\vec{\phi}$. There is a suggestion about computing a minimal feasible decision rule for each action according to the budget of market creator \cite{Chen2011DecisionIncentives}.

Using the conditions for equation \eqref{eq:security_actual_payoff}, the worst-case loss for a market creator of a securities based decision market is: 

\begin{equation} \label{eq:secWorstLoss}
    \begin{split}
        & \sum_{k=1}^m \Big(C_k(^*\vec{q_k}) - C_k(\vec{q_k}) \Big) - \max_{i, j} \left(\frac{1}{\phi_j} \left(^*q^{j}_i - q^j_i \right) \right) \\
    \end{split}
\end{equation}

In equation \eqref{eq:secWorstLoss}, the term $\sum_{k=1}^m (C_k(^*\vec{q_k}) - C_k(\vec{q_k})))$ is the income from securities sales for market creator, which mirrors the cost spent by participants in order to move the inventory distribution from $\vec{q_k}$ to $^*\vec{q_k}$ in each conditional market $k$. The second term, $\max_{i, j} (\frac{1}{\phi_j} (^*q^{j}_i - q^j_i) )$ in the equation is the maximal payout that can be won by participants. In order to compare the worst-case losses, we substitute the equation \eqref{eq:cost_function} into the equation \eqref{eq:secWorstLoss} and obtain: 
\begin{equation} \label{eq:secWorstLoss2}
    \begin{split}
        & \sum_{k=1}^m \Big(C_k(^*\vec{q_k}) - C_k(\vec{q_k}) \Big) - \max_{i, j} \left(\frac{1}{\phi_j} \left(^*q^{j}_i - q^j_i\right) \right)\\
        = \quad & \underbrace{\sum_{k=1, k\neq j}^m \Big(C_k(^*\vec{q_k}) - C_k(\vec{q_k}) \Big)}_{\text{a}} - \max_{i, j} \bigg( \underbrace{s^j_i(^*\vec{r_j}) - s^j_i(\vec{r_j})}_{\text{b}}  - \underbrace{\frac{(1-\phi_j)}{\phi_j} \left(^*q^{j}_i - q^j_i \right)}_{\text{c}} \bigg)
    \end{split}
\end{equation}

Term a of equation \eqref{eq:secWorstLoss2} is non-negative, and depends on the sales in all conditional markets except for the one representing the selected action. Term b is the scoring rule that corresponds to our cost function and is bounded. However, term c depends on final outstanding securities $^*q^j_i$ and $(1-\phi_j)/\phi_j$. While the market creator has control over $\phi_j$, the final outstanding securities is not known ex ante. Therefore no finite initial escrow can guarantee to cover the market creator's liability. This loss of a bound on the worst-case loss of a market maker differs from the the loss of a bound from low probabilities in the decision rule as described in \cite{Chen2011DecisionIncentives}. The final budget for a market maker in a decision market does not have an upper limit because traders can buy arbitrarily large numbers of shares $q^j_i$ on the selected action, while buying fewer (or no) shares on the other actions. Note that it is in the interest for the market creator to assign a small probability $\phi_j$ to actions that are not preferred. The term $(1-\phi_j)/\phi_j$ increases rapidly as $\phi_j$ approaches zero. In summary, the advantage of a worst-case loss for the 
participants that does not depend on the decision rule thus comes at the disadvantage that the worst case loss for the market creator cannot be known ex ante.

\subsection{Numeric Example for Worst-case Losses}
Let us consider a numeric example that a market creator has two actions and each action has two outcomes. The market creator uses simple logarithmic scoring rule with $s_j(^*r_j) = log(^*r_j)$ for both actions. The corresponding cost function is $C_j(^*\vec{q}_j) = log{((e^{q_1} + e^{q_2})/ 2})$. The markets start with an initial report of $r^j_1 = r^j_2 = 0.5$ and $q^j_1 = q^j_2 = 0$ for each action $\alpha_j$. A forecaster reports as shown in the table \ref{tab:report}.

\begin{table}[ht]
\centering
\caption{The reports and the corresponding securities.}
\begin{tabular}{c|cc}
 & \multicolumn{2}{c}{\textbf{Report} $\pmb{^*\vec{r}_j}$} \\ 
 & $\omega_1$ & $\omega_2$ \\ \hline
 \rowcolor[HTML]{EFEFEF} 
$\alpha_1$ & 0.88 & 0.12 \\
$\alpha_2$ & 0.27 & 0.73 \\ \hline
\end{tabular}
\qquad
\begin{tabular}{cc}
\multicolumn{2}{c}{\textbf{Securities} $\pmb{^*\vec{q}_j}$} \\ 
$\omega_1$ & $\omega_2$ \\ \hline
\rowcolor[HTML]{EFEFEF}
2 & 0 \\
0 & 1 \\ \hline
\end{tabular}
\label{tab:report}
\end{table}

\begin{figure}[ht]
    \centering
    \includegraphics[width=0.5\textwidth]{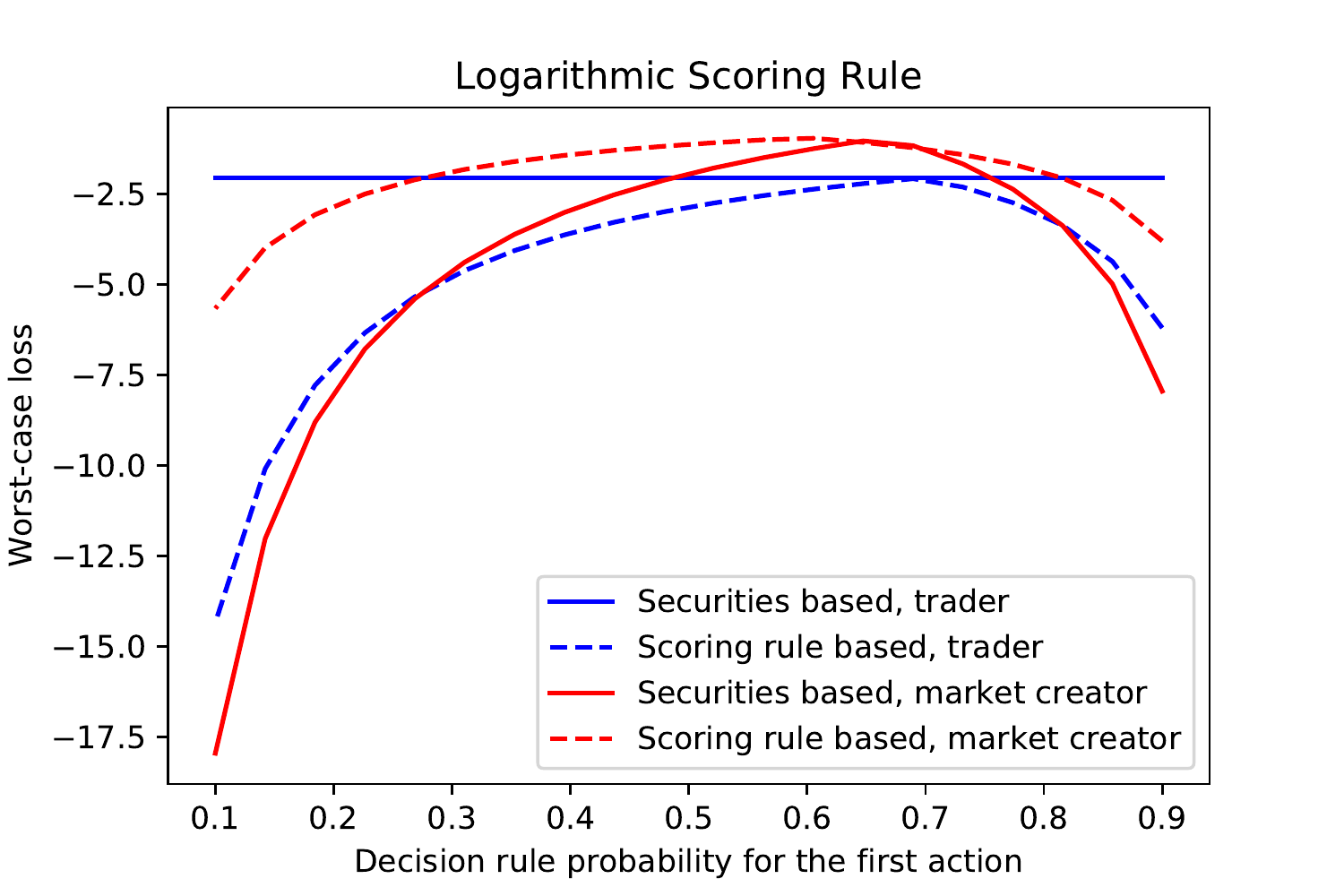}
    \caption{Worst-case loss comparison between the trader and the market creator in different decision markets with a varying probabilities in the decision rule. }
    \label{fig:worst_case_loss_traders_comparison}
\end{figure}

Figure \ref{fig:worst_case_loss_traders_comparison} shows the worst-case loss for the trader and the market creator under a securities based and scoring rule based mechanism in dependence of the decision rule. A trader in the securities based mechanism faces a worst-case loss that does not depend on the decision rule. In the scoring rule based market, his/her risk is higher and increases rapidly if the decision rule approaches 0 or 1. For the values used here, the market creator in the scoring rule based decision market faces less risk than the trader but the risk is still dependent on the decision rule probability. In securities based market, the market creator, however, faces larger risks. The reduction of the worst-case loss of the forecaster comes with an increased worst-case loss for the market creator. This feature of securities based decision markets is preferred by market creators who are willing to sacrifice financial efficiency for more liquidity. 

\section{Re-allocation of worst case losses between market creator and participants} \label{sec:liability_distribution_framework}
Compared to scoring rule based decision markets, securities based decision markets offer additional flexibility to shape the distribution of realized payoffs. This flexibility arises because in a securities based decision market, each report $^*r$ can be realized through infinitely many trades. As outlined in equation \eqref{eq:cost_property}, in prediction markets, if a trader purchases $\beta$ security for each outcome, the cost will be $\beta$. The prices, i.e. the current forecast for the probability distribution over the outcomes, remains the same. The payout from these additional securities will be exact $\beta$ regardless of outcomes, and the net realised payoff for such a trade will be zero.

In a securities based decision market this is, however, not the case. Assume a trader in a securities based decision market purchases a number of $\beta_k $ of each outcome in conditional market $k$, for any $k$. We refer to such a trade as purchasing a ``bundle" of securities. Let the market creator select action $\alpha_j$. Regardless of which outcome is observed, the realised payoff of the trader from these trades can be obtained as:
\begin{equation}
    \frac{1}{\phi_j} \beta_j - \sum^m_{k=1} \beta_k
\end{equation}
While the prices for each outcome in all conditional markets remains the same, the realised payoff for the trader in a decision market is affected by these trades and depends on which action is selected. 



This property allows a trader to adjust the distribution of realised payoffs through purchasing bundles of securities without changing the reported probability distributions. Purchasing the same number of $\beta_k $ of each outcome in conditional market $k$ can also be viewed as the trader purchasing a `lottery' ticket that costs $\beta_k$ and returns $\beta_k/\phi_k$ at a probability of $\phi_k$. The realised payoff of a trader in a securities based decision market can be rewritten as:
\begin{equation} \label{eq:beta_realised_payoff}
    \frac{1}{\phi_j} (^*q^j_i - q^j_i) - \sum^m_{k=1}  (C_k(^*\Vec{q}_k) - C_k(\Vec{q}_k)) + \frac{1}{\phi_j} \beta_j - \sum^m_{k=1} \beta_k
\end{equation}

For each specific report there is a manifold of trading strategies that link to it each of which leading to a different distribution of payoffs.

\subsection{Standardised trades}
For further discussion, we introduce a standardized trade as reference point in the trading strategy space. We define a standardized trade for changing reports from $\vec{r}_k$ from $^*\vec{r}_k$ as the trade with the smallest possible non-negative elements in the number of purchased shares for any $k$. The standardised trade can be obtained from any arbitrary trade $(^*\vec{q}_k - \Vec{q}_k)$ as:
\[
    (^*\vec{q}_k - \Vec{q}_k) - \min_i (^*q^k_i - q^k_i) \times \Vec{1}_k \qquad \forall k
\]

An example is shown in table \ref{tab:std_securities}. An arbitrary and the standardised trade lead to the identical report. A standardised trade is the least costly way to make a report without shorting in a securities based decision market. As for any trade that does not involve short positions, traders will have no liabilities once they have paid the cost for purchasing securities. On the other hand, the market creator has a greater liability to resolve the outstanding securities after the action is selected and the outcome is realised. 
\begin{table}[ht]
\centering
\caption{Both the arbitrary trade and the standardised trade lead to the same report but result in different realised payoffs.}
\begin{tabular}{c|cc}
 & \multicolumn{2}{c}{\textbf{Arbitrary Trade}} \\
 & $\omega_1$ & $\omega_2$ \\ \hline
 \rowcolor[HTML]{EFEFEF}
$\alpha_1$ & 3 & 1 \\
$\alpha_2$ & 1 & 2 \\ \hline
\end{tabular}
\qquad
\begin{tabular}{cc}
\multicolumn{2}{c}{\textbf{Standardised Trade}} \\
$\omega_1$ & $\omega_2$ \\ \hline
\rowcolor[HTML]{EFEFEF}
2 & 0 \\
0 & 1 \\ \hline
\end{tabular}

\label{tab:std_securities}
\end{table}

\subsection{Approximating the payoffs of scoring rule based decision markets}
The flexibility to shape the distribution of realised payoffs can be used to design trades such that the payoffs in securities based decision markets match exactly those in scoring rule based markets. However, this requires the traders to accept negative positions, i.e. to short sell securities, and re-introduces two-sided liabilities.

Let $\beta_k$ in equation \eqref{eq:beta_realised_payoff} to be substituted by $- \big(C_k(^*\Vec{q}_k) - C_k(\Vec{q}_k)\big)$ for all $k$. The realised payoff for such a trader can be obtained by:
\begin{equation} \label{eq:scoring_rule_based_approx}
\begin{split}
    & \frac{1}{\phi_j} (^*q^j_i - q^j_i) - \sum^m_{k=1}  \big(C_k(^*\Vec{q}_k) - C_k(\Vec{q}_k)\big)  - \frac{1}{\phi_j}\big(C_j(^*\Vec{q}_j) - C_j(\Vec{q}_j)\big) \\ & + \sum^m_{k=1} \big(C_k(^*\Vec{q}_k) - C_k(\Vec{q}_k)\big) \\
    = \quad & \frac{1}{\phi_j}\Big((^*q^j_i - q^j_i) - \big(C_j(^*\Vec{q}_j) - C_j(\Vec{q}_j)\big)\Big) \\
    = \quad & \frac{1}{\phi_j}\big( ^*s^j_i(\Vec{r}_j) - s^j_i(\Vec{r}_j)\big)
\end{split}
\end{equation}


With such a trade, a trader makes a report through longing securities that she/he believes are under-priced and shorting securities on over-priced outcomes to meet the cost. The net cost for such a trade is zero. An example for the numbers of securities exchanged under such a trading strategy is shown in Table \ref{tab:scoring_rule_approx}.

\begin{table}[ht]
\centering
\caption{Traders can design their trades to achieve the same realised payoffs as the corresponding scoring rule based decision market. However, this requires short selling.}
\begin{tabular}{c|cccc}
  & \multicolumn{2}{c}{\textbf{Standardised Trades}} & \multicolumn{2}{c}{\textbf{Scoring Rule}}                 \\
  & $\omega_1$         & $\omega_2$         & $\omega_1$                 & $\omega_2$              \\ \hline
  \rowcolor[HTML]{EFEFEF}
$\alpha_1$ & 2               & 0                 & $2 - \ln{(e^2 + 1)/ 2}$ & $ - \ln{(e^2 + 1)/ 2}$ \\
$\alpha_2$ & 0               & 1                 & $- \ln{(e + 1)/ 2}$     & $1 - \ln{(e + 1)/ 2}$  \\ \hline
\end{tabular}

\label{tab:scoring_rule_approx}
\end{table}

\subsection{Market Creator Liability Free Decision Markets}
We will conclude this discussion a with a design that allocates liabilities entirely to the side of traders. Let the $\beta_k = \max_i (^*q^k_i - q^k_i)$ in the conditional market which represents the action $\alpha_k$, we obtain 

\[
        (^*\vec{q}_k - \Vec{q}_k) - \max_i (^*q^k_i - q^k_i) \times \Vec{1}_k \qquad \forall k
\]

The realised payoff from the market creator point of view can be obtained by:
\begin{equation} \label{eq:beta_max}
    \begin{split}
         & \sum^m_{k=1} \big(C_k(^*\Vec{q}_k) - C_k(\Vec{q}_k)\big) - \sum^m_{k=1} \beta_k - \frac{1}{\phi_j} \left((^*q^j_i - q^j_i) - \beta_j \right) \\
         = \qquad & \sum^m_{k=1} \big(C_k(^*\Vec{q}_k) - C_k(\Vec{q}_k)\big) - \sum^m_{k=1} \max_x (^*q^k_x - q^k_x) - \frac{1}{\phi_j}\left((^*q^j_i - q^j_i) - \max_x (^*q^j_x - q^j_x) \right) \\
         = \qquad & \frac{1}{\phi_j}\underbrace{\left(\max_x (^*q^j_x - q^j_x) - (^*q^j_i - q^j_i) \right)}_{\text{Non-negative}} - \underbrace{\sum^m_{k=1} \left(\max_x \big(s^k_x(^*\vec{r}_k) - s^k_x(\vec{r}_k)\big) \right) }_{\text{Known ex ante}} \\
    \end{split}
\end{equation}

As demonstrated by equation \eqref{eq:beta_max}, the term that depends on decision rules and quantities of outstanding securities is guaranteed to be non-negative. As a result, the liability of the market creator is guaranteed to be covered if the market creator can afford $\sum^m_{k=1} \max_x \big(s^k_x(^*\vec{r}_k) - s^k_x(\vec{r}_k)\big)$ which is known ex ante. 

\begin{table}[ht]
\centering
\caption{In this market, traders report probability distribution through short selling completely.}
\begin{tabular}{c|cccc}
  & \multicolumn{2}{c}{\textbf{Standardised Trades}} & \multicolumn{2}{c}{\textbf{Liability-free}} \\ 
  & $\omega_1$         & $\omega_2$         & $\omega_1$         & $\omega_2$        \\ \hline
  \rowcolor[HTML]{EFEFEF}
$\alpha_1$ & 2               & 0                 & $0$             & $ - 2$           \\
$\alpha_2$ & 0               & 1                 & $-1$            & $0$     \\ \hline
\end{tabular}
\label{tab:liability_free}
\end{table}

In this setting, the market creator essentially purchases securities from the forecasters. The market creator covers the worst case loss by paying for the securities.

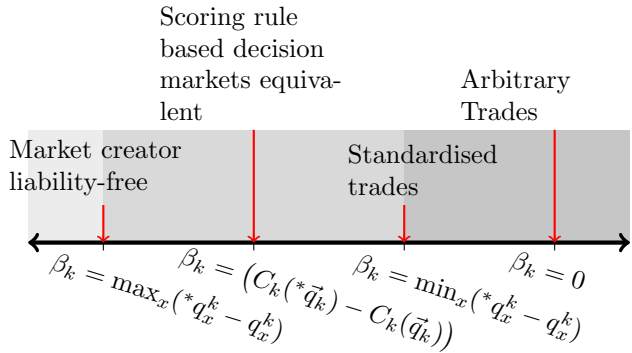
\begin{figure}[ht]
    \centering
    \begin{tikzpicture}
    
    \fill [gray!15] (0,0) rectangle (1, 1.5);
    \fill [gray!30] (1,0) rectangle (5, 1.5);
    \fill [gray!45] (5,0) rectangle (8, 1.5);
    
    \draw[ultra thick, <->] (0,0) -- (8,0);
    
    \foreach \x in {1,3,5,7}
    \draw (\x cm,3pt) -- (\x cm,-3pt);
    
    \draw[ultra thick] (1,0) node[below=3pt,rotate=-17, xshift=1cm] {$\beta_k =  \max_x (^*q^k_x - q^k_x)$};
    \draw[ultra thick] (3,0) node[below=3pt,rotate=-17, xshift=1cm] {$\beta_k =  \big(C_k(^*\Vec{q}_k) - C_k(\Vec{q}_k)\big)$};
    \draw[ultra thick] (5,0) node[below=3pt,rotate=-17, xshift=1cm] {$\beta_k =  \min_x (^*q^k_x - q^k_x)$};
    \draw[ultra thick] (7,0) node[below=3pt,rotate=-17, xshift=0cm] {$\beta_k = 0$};

    \draw[thick, ->, red] (1, 0.5) -- (1, 0) node[above=0.5cm, black, text width=2.5cm]{Market creator liability-free};
    
    \draw[thick, ->, red] (3, 1.5) -- (3, 0) node[above=1.5cm, black, text width=2.5cm]{Scoring rule based decision markets equivalent};
    
    \draw[thick, ->, red] (5, 0.5) -- (5, 0) node[above=0.5cm, black, text width=2.5cm, xshift=0.5cm]{Standardised trades};
    
    \draw[thick, ->, red] (7, 1.5) -- (7, 0) node[above=1.5cm, black,text width=2.5cm]{Arbitrary Trades};

    \end{tikzpicture}
    \caption{The liability is shifted with different choice of $\beta_k$.}
    \label{fig:liability_distribution}
\end{figure}

Figure \ref{fig:liability_distribution} provides a graphical summary of the trading strategies discussed above. In the dark gray area on the right side of the figure, traders cover their worst-case losses when paying for the purchased securities. The worst-case loss for the market creator can become arbitrarily high. In contrast, in the left, light gray part of the graph where traders engage solely in short selling, the market creator will cover their worst-case loss when paying for the securities purchased from the forecasters, and the traders will have liabilities to the market creator. The scoring rule based decision market sits in between where the market creator and forecasters both have liabilities. 

\subsection{External Insurers}
In previous sections, we use the trading of bundles $\beta_k$ in a conditional market $k$ to better understand how liabilities are allocated in different mechanisms. Consider we separate traders into two kinds: regular traders report in standardised trading (thus change prices) with long position only; special traders trade ``bundles" of securities (which does not change prices) with short position only. These special traders can be considered as insurers. In this case, the liability can be separated from both regular traders and the market creator. Thus the stochastic decision rule can further approximate deterministic decision rules with a predictable worst-case loss for the market creator. 

If an insurer accepts short positions at an amount equal to the costs of all regular traders' outstanding securities $\big(C_k(^*\Vec{q}_k) - C_k(\Vec{q}_k)\big)$ for each conditional market $k$, the realised worst-case loss of the market creator can be obtained by:
\[
\begin{split}
    & \sum^m_{k=1}  \big(C_k(^*\Vec{q}_k) - C_k(\Vec{q}_k)\big) - \frac{1}{\phi_j} (^*q^j_i - q^j_i)  \\& + \frac{1}{\phi_j}\big(C_j(^*\Vec{q}_j) - C_j(\Vec{q}_j)\big) - \sum^m_{k=1} \big(C_k(^*\Vec{q}_k) - C_k(\Vec{q}_k)\big) \\
    = \quad & \frac{1}{\phi_j}\Big(\big(C_j(^*\Vec{q}_j) - C_j(\Vec{q}_j)\big) - (^*q^j_i - q^j_i)\Big) \\
    = \quad & \frac{1}{\phi_j}\big( s^j_i(\Vec{r}_j) - ^*s^j_i(\Vec{r}_j)\big)
\end{split}
\]
The realised payoff will be identical to the scoring rule that the cost function is derived from. If we change the insurers' shorting position to $\max_{x}\big(^*q^k_x - q^k_x\big)$ to each conditional market $k$ and substitute equation \eqref{eq:cost_function} into the market creator's realised payoff , then we will have:

\[
    \begin{split}
        & \sum^m_{k=1}  \big(C_k(^*\Vec{q}_k) - C_k(\Vec{q}_k)\big) - \frac{1}{\phi_j} (^*q^j_i - q^j_i)  + \frac{1}{\phi_j}\max_{x}\big(^*q^k_x - q^k_x\big) - \sum^m_{k=1} \max_{x}\big(^*q^k_x - q^k_x\big) \\
         = \qquad & \sum^m_{k=1} \big(C_k(^*\Vec{q}_k) - C_k(\Vec{q}_k)\big) - \sum^m_{k=1} \max_x (^*q^k_x - q^k_x) - \frac{1}{\phi_j}\left((^*q^j_i - q^j_i) - \max_x (^*q^j_x - q^j_x) \right) \\
         = \qquad & \frac{1}{\phi_j}\underbrace{\left(\max_x (^*q^j_x - q^j_x) - (^*q^j_i - q^j_i) \right)}_{\text{Non-negative}} - \underbrace{\sum^m_{k=1} \left(\max_x \big(s^k_x(^*\vec{r}_k) - s^k_x(\vec{r}_k)\big) \right) }_{\text{Known ex ante}} \\
    \end{split}
\]

The insurer essentially offers a lottery. For real-world applications it might not be realistic to assume that this is done without additional costs. However, paying a small additional fee to such an insurer is might be in the interest of the market creator to reduce the worst-case loss, which in turn will also to better approximate deterministic decision rules.

\section{Conclusion and Discussion} \label{sec:conclusion}
We introduce a setting for securities based decision markets that can be conveniently deployed in practical applications. In such a setting, a trader will report a forecast through trading securities. For the securities that represent the selected action and observed outcome, a trader will receive $1/\phi_j$ payoff per share, where $\phi_j$ is the probability in the decision rule corresponding to the selected action. Other shares pay zero, including those purchased in the unselected conditional markets. We prove that under the same condition, specifically, the same action space and the same outcome space, a securities based decision market in our setting has the same expected payoff for participants as the corresponding scoring rule based decision market. 

We compare a securities based decision market under our setting and the corresponding scoring rule based decision market in terms of realised payoffs for participants. The comparison demonstrates that the difference depends on how much the participants report or trade in the selected conditional market. We notably find that the forecaster in scoring rule based decision market will have no cost for reporting in unselected conditional markets while this is not the case in the securities based decision market. We further show that with an additional `lottery', a forecaster in scoring rule based decision market can have the identical realised payoffs as a trader in the corresponding securities based decision market under the same condition. Similarly, the realised payoffs of a trader in a securities based decision market can recover the realised payoffs of the corresponding scoring rule based market, but this requires the forecasters to ‘short-sell’ securities, i.e. to hold negative positions.

By being equivalent to a scoring rule based decision market with an additional zero-mean `lottery', the securities based mechanism described here offers an additional set of parameters that allow to shape the distribution of payoffs beyond what can be achieved based on scoring rules alone. This allows to re-allocate liabilities and worst-case losses between forecasters and the market creator. As illustrated in section \ref{sec:worst-case losses}, in a market where forecasters only purchase positive positions (no short selling), their liabilities are covered when paying for the purchased securities. Moreover, in contrast to scoring rule based decision markets, their worst-case losses do not depend on the probabilities used in the decision rule. A securities based decision market design might thus be of advantage for a market creator who aims to attract forecasters who are concerned about limiting their worst-case losses. In Section \ref{sec:liability_distribution_framework}, we show that the creator risks can be further mitigated by external insurers, which allows to closer approximate deterministic decision rules. Further empirical studies will be of value in determining how to shape trading to obtain the most accurate forecasts.

%
%
%
%
%


\bibliographystyle{unsrtnat}
\bibliography{references}

\end{document}